\documentclass[conference]{IEEEtran}
\IEEEoverridecommandlockouts
\usepackage{url}

\usepackage{booktabs}
\usepackage{multirow}
\usepackage{tabularx}
\usepackage{array}
\usepackage{placeins}

\newcommand{\LMP}{\mathsf{LMP}}

\newtheorem{theorem}{Theorem}

\newcommand{\LMPU}{\mbox{LMP+U}}

\newcommand{\LMPUm}{\mathsf{LMP+U}}
\newcommand{\MDCP}{\mathsf{MDCP}}

\usepackage[export]{adjustbox}
\usepackage[caption=false]{subfig}
\usepackage{multirow}
\usepackage{cite}
\usepackage{amsmath,amssymb,amsfonts,bm,bbm}
\usepackage{algorithmic}
\usepackage{graphicx}
\usepackage{wrapfig}
\usepackage{textcomp}
\usepackage{mathtools}

\usepackage{xcolor}
\def\BibTeX{{\rm B\kern-.05em{\sc i\kern-.025em b}\kern-.08em
    T\kern-.1667em\lower.7ex\hbox{E}\kern-.125emX}}

\usepackage{etoolbox}

\newcommand{\compactdisplay}{%
  \setlength{\abovedisplayskip}{3pt plus 1pt minus 1pt}%
  \setlength{\belowdisplayskip}{3pt plus 1pt minus 1pt}%
  \setlength{\abovedisplayshortskip}{2pt plus 1pt minus 1pt}%
  \setlength{\belowdisplayshortskip}{2pt plus 1pt minus 1pt}%
  \setlength{\jot}{1pt}%
}

\AtBeginEnvironment{align}{\compactdisplay}
\AtBeginEnvironment{align*}{\compactdisplay}
\AtBeginEnvironment{equation}{\compactdisplay}
\AtBeginEnvironment{equation*}{\compactdisplay}
\AtBeginEnvironment{gather}{\compactdisplay}
\AtBeginEnvironment{gather*}{\compactdisplay}
\AtBeginEnvironment{multline}{\compactdisplay}
\AtBeginEnvironment{multline*}{\compactdisplay}

\begin{document}

\title{Optimal Uniform Pricing for Multi-Interval Dispatch without Make-Whole Uplifts
\thanks{\scriptsize
Valentina Norambuena and Lang Tong (\{van27,lt35\}@cornell.edu) are with the School of Electrical and Computer Engineering, Cornell University, Ithaca, NY, USA.
Cong Chen (cong.chen@dartmouth.edu) is with the Thayer School of Engineering, Dartmouth College, Hanover, NH, USA. Timothy D. Mount (tdm2@cornell.edu) is with the Dyson School of Applied Economics and Management, Cornell University, Ithaca, NY, USA. This work is supported in part by the National Science Foundation under Awards 2603293 and 2412776, and the Power Systems Engineering Research Center (PSERC) Research Project M-48.}}

\author{
Valentina Norambuena-Guzman,~\IEEEmembership{Student Member,~IEEE},
\qquad
Cong Chen,~\IEEEmembership{Member,~IEEE},
\\
Lang Tong,~\IEEEmembership{Fellow,~IEEE},
\qquad
and Timothy D. Mount,~\IEEEmembership{Life Member,~IEEE} 
}

\maketitle

\begin{abstract}
In a network with ramp-limited generators and inaccurate net-demand forecasts, practical rolling-window dispatch can drive locational marginal prices (LMPs) below generators’ bid-in offers. In such cases, out-of-market (OOM) settlements are used to compensate generators and maintain dispatch-following incentives, but OOM can have negative consequences, including nontransparent real-time price signals, discriminatory compensation, and incentives for untruthful bidding. This paper presents an optimal uniform pricing rule that minimizes demand payments, eliminates OOM make-whole payments, preserves LMP-based congestion charges, and ensures revenue adequacy. We derive the proposed pricing rule in closed form and relate it to existing pricing schemes. Numerical comparisons demonstrate favorable generator profits and reduced price volatility. However, higher generator profits are accompanied by increased demand payments, reflecting the in-market, uniform allocation of ramping costs while preserving the LMP-based congestion charges widely used in real-time market settlements. The numerical results also show that, under LMP with OOM settlement, a price-taking generator has an incentive to inflate its offer, whereas this incentive is absent under the proposed pricing rule within the tested bid range.
\end{abstract}

\begin{IEEEkeywords}
Bid-cost recovery, locational marginal pricing, make-whole payment, rolling-window dispatch, truthful bidding incentives.
\end{IEEEkeywords}

\section{Introduction}
Out-of-market (OOM) settlement in an organized electricity market, also known as OOM uplift, is an intervention mechanism through which a system operator compensates market participants when market-clearing prices alone do not provide sufficient incentives to follow the operator's dispatch instructions.

 We focus on real-time economic dispatch in an energy market with ramp-constrained generators and inaccurate net-demand forecasts. To incentivize ramping support, system operators have introduced \emph{flexible ramp products} \cite{CAISO2023FRP,MISO2025FRP} to procure and price ramp resources.

Because the economic dispatch model does not fully account for ramping costs and net-demand uncertainty in price formation, market-clearing LMPs may fall below accepted generation offers, violating the uniform-price-auction principle that accepted supply should be paid no less than its offer. OOM settlements are therefore needed to cover the resulting revenue shortfalls and maintain dispatch-following incentives. In 2022, real-time OOM bid-cost-recovery payments within CAISO
totaled about \$141 million, approximately 21\% of the estimated \$660 million real-time energy cost \cite{CAISORpt:22}.

OOM settlement has negative consequences \cite{Hogan:14EJ}. Here, we highlight two relevant aspects. While LMP with OOM settlement secures dispatch-following incentives, it creates new incentive problems.  In particular, under the ideal dispatch model, a price-taking generator has no incentive to offer above its marginal cost. With OOM uplifts, a generator's offer can affect its uplift payment even when the offer does not affect the dispatch shadow prices or the resulting LMP. Consequently, a price-taking generator may have an incentive to deviate from truthful bidding to increase its OOM uplift revenue.

OOM uplift also raises issues of discriminatory settlement and market transparency. OOM payments apply only to some participants, each compensated at different levels. While LMP is uniform and transparent to all at the same location, OOM uplifts result in non-uniform \$/MWh for generators at the same location. The market also lacks a transparent price signal for ramping support.

Hogan argues in \cite{Hogan:14EJ} that ``the first-best way to address the deficiencies of the energy market is by addressing real-time pricing. Only after exhausting the consistent improvements in real-time pricing should we resort to the second-best OOM interventions.'' While lost opportunity cost (LOC) uplift cannot in general be eliminated under uniform pricing \cite{Guo21TPS}, the widely adopted make-whole uplift can, in principle, be eliminated by setting a sufficiently high uniform price at each location. However, arbitrarily raising prices to eliminate uplift can substantially increase demand payments and distort real-time price signals.

{ This paper presents a uniform pricing rule that minimizes demand payment, eliminates OOM make-whole uplift, preserves LMP-based congestion charges and is revenue adequate under the conditions
stated in Theorem~\ref{thm1}.}

\subsection{Related Work}
One of the earliest publications addressing inadequate compensation under LMP for fast-ramp resources and the resulting need for OOM uplift is \cite{ISONE:12}, followed by extensive discussions in \cite{Schiro2017procurement, Gribik2007, Zhang2009, AlAbdullah2015, Guo21TPS, Chen21TPS, Cavicchi2018}. Many uniform pricing mechanisms have been proposed to reduce, but not eliminate, OOM uplift \cite{Ela2016, Hua2019, Zhao2020, Schiro2017flexibility, Chen25PESGM, Chen26arxiv}, among which only Maximum Dispatch Cost Pricing (MDCP) \cite{Chen25PESGM, Chen26arxiv} completely eliminates OOM MWPs in uncongested networks. Defined by the highest marginal cost among dispatched generators, MDCP supports truthful bidding under the conditions established in \cite{Chen26arxiv}. However, MDCP does not naturally generalize to congested networks. Applying MDCP at generator buses and LMP at load-only buses yields extended MDCP (E-MDCP), which does not generally preserve LMP-based congestion charges and may not be revenue
adequate. These limitations motivate the network-compatible uniform pricing rule proposed in this paper.

\subsection{Summary of Contributions}
 We present the optimal uniform pricing rule for rolling-window dispatch with generator ramping constraints and arbitrary net-demand forecasting errors. By uniform pricing, we mean that all resources at the same bus face the same energy price. Optimality refers to minimizing total demand payment subject to zero OOM make-whole uplift while preserving LMP-based congestion charges and merchandising surplus.

The main contribution of this work is fourfold. First, we formulate a demand-payment minimization problem in which the candidate nodal prices are the decision variables, subject to constraints that enforce (i) zero OOM make-whole payment and (ii) preservation of LMP-based congestion charges. The latter constraint is particularly important because it prevents the price adjustment for intertemporal ramping constraints from distorting the spatial price differences already reflected in LMP, thereby preserving the associated congestion rents and merchandising surplus.

Second, we derive a closed-form solution to the demand-payment minimization problem and show that the optimal price takes the form of LMP plus a  {\em uniform price adder across locations}. We refer to the proposed pricing rule as LMP with Uniform Price Adder, abbreviated as LMP+U.

We further establish LMP and MDCP as special cases of LMP+U under the conditions of Theorem~\ref{thm1}: LMP+U reduces to LMP when the ramping constraints and lower generation bounds are nonbinding and at least one active generator is below its upper bound, and to MDCP when the transmission constraints are nonbinding. Unlike price-preserving multi-interval pricing (PMP) \cite{Schiro2017flexibility} and constraint-preserving multi-interval pricing (CMP) \cite{Hua2019}, LMP+U admits a closed-form solution and therefore requires no separate pricing optimization.

Third, simulations show that pricing schemes combined with OOM
make-whole payments can incentivize a price-taking generator to
inflate its offer. In particular, under LMP, PMP, and CMP with OOM make-whole payments, a price-taking generator can increase its net revenue by inflating its offer above marginal cost. Under LMP+U, a price-taking generator cannot influence shadow prices of the dispatch optimization. Without OOM uplift, it has no incentive to deviate from bidding truthfully.


Finally, numerical simulations quantify the performance of LMP+U relative to LMP/PMP/CMP with OOM make-whole uplift and provide insights into the effects of network congestion. In most simulated cases, LMP+U yields higher generator net revenue and higher demand payment than the comparison benchmarks. This ordering, however, is not universal: the optimal uniform price adder can be negative, in which case LMP+U may reduce demand payment relative to LMP. The simulations further characterize how network congestion affects the magnitude of the uniform price adder and its effects on generator revenue and demand payment.

\begin{table}[!t]
\caption{Major Symbols and Notation}
\label{tab:major_symbols}
\centering
\small
\renewcommand{\arraystretch}{1.05}
\begin{tabularx}{\columnwidth}
{@{}l>{\raggedright\arraybackslash}X@{}}
\hline
\textbf{Symbol} & \textbf{Description}\\
\hline
$\mathcal H_t, W$ & Look-ahead window at interval $t$ and window length.\\
$[M],\mathcal N_m,\mathcal A_t$ & Sets of buses, generators at bus $m$, and dispatched generators at interval $t$.\\
$g_{it'}^m,\hat d_{t'}^m,D_t$ & Generator dispatch, bus net demand, and total binding-interval net demand.\\
$c_{it'}^m$ & Bid-in marginal cost of generator $i$ at bus $m$.\\
$\bm S,\bm f$ & Shift-factor matrix and line-flow-limit vector.\\
$\bm\pi_t^{\LMP},\bm\pi_t^{\LMPUm}$ & Binding-interval LMP and LMP+U price vectors.\\
$\Delta\pi_t^*$ & Optimal uniform price adder to LMP.\\
$\mbox{\sf MWP}_t,\mbox{\sf MS}_t$ & Make-whole payment and merchandising surplus.\\
\hline
\end{tabularx}
\end{table}

\section{Dispatch Model and Bid-Cost Recovery}
\subsection{Rolling-window Dispatch in Systems with Flow Limits}

Let the dispatch session be indexed by $\mathcal{H}:=\{1,\ldots,T\}$.  In interval $t$, the operator solves a $W$-interval rolling-window dispatch over the look-ahead window $\mathcal{H}_t:=\{t,\cdots,t+W-1\}$. Let $[M]:=\{1,\ldots,M\}$ denote the set of buses and $\mathcal N_m$ the set of generators at bus $m$.

Within the rolling-window horizon $\mathcal{H}_t$ at time $t$,  let $\hat{d}_{t'}^m$ be
forecast demand and $g_{it'}^m$ the dispatch of generator $i$ at bus $m \in [M]$ in interval $t'\in \mathcal{H}_t$. Define the aggregate generation at bus $m$ and the corresponding nodal generation and net-demand vectors as
\[
q_{t'}^m :=\sum_{\mathclap{i\in\mathcal N_m}}g_{it'}^m,\quad \bm q_{t'}:=(q_{t'}^m)_{m\in[M]},\quad \hat{\bm d}_{t'}:=(\hat d_{t'}^m)_{m\in[M]}.
\]
Let $c_{it'}^m$ be the generation offer (bid-in marginal cost), $(\underline r_i^m,\overline r_i^m)$ the down- and up-ramp limits, and $(\underline g_i^m,\overline g_i^m)$ the generation limits. 

Given the shift-factor matrix $\bm S$ and the line-flow limit vector $\bm f$,  the multi-interval dispatch optimization at interval $t$ is:

{\small
\vspace{-0.3cm}
\begin{align}
\min_{\{g_{it'}^m\}}\quad &\sum_{t'=t}^{t+W-1}
  \sum_{m=1}^{M}\sum_{i\in\mathcal N_m}
  c_{it'}^m g_{it'}^m \label{eq:obj}\\
\llap{\text{s.t.}\qquad} \lambda_{t'}:\quad &\sum_{m=1}^{M}\sum_{i\in\mathcal N_m}g_{it'}^m =\sum_{m=1}^{M}\hat d_{t'}^m \label{eq:balance}\\
\bm\phi_{t'}:\quad
&\bm S\left(\bm q_{t'}-\hat{\bm d}_{t'}\right) \leq\bm f \label{eq:flow}\\
\underline{\tau}_{it'}^m,\overline{\tau}_{it'}^m:\quad
&-\underline r_i^m \leq g_{it'}^m-g_{i(t'-1)}^m \leq\overline r_i^m \label{eq:inter_ramp}\\
\underline{\alpha}_{it'}^m,\overline{\alpha}_{it'}^m:\quad
&\underline g_i^m \leq g_{it'}^m \leq\overline g_i^m. \label{eq:capacity}
\end{align}}

\noindent  Constraints~\eqref{eq:balance}--\eqref{eq:flow} apply for every $t'\in\mathcal H_t$, whereas \eqref{eq:inter_ramp}--\eqref{eq:capacity} apply for every $m\in[M]$, $i\in\mathcal N_m$, and $t'\in\mathcal H_t$.
 The solution of the above optimization gives the realized dispatch $(g^{m*}_{it})$, power balance and congestion shadow prices, $\lambda^*_t$  and $\bm \phi^*_t$ respectively, in the binding interval $t$. The $M$-dimensional {\em rolling-window LMP} vector  $\boldsymbol{\pi}_t^{\mbox{\tiny\sf LMP}}$ at all buses  is given by
\begin{equation}
\boldsymbol{\pi}_t^{\mbox{\tiny\sf LMP}} = \lambda^*_t\mathbf{1} - \bm S^{\mathsf T} \bm \phi^*_t,
\label{eq:lmp_decomposition}
\end{equation}
where $\lambda_t^*$ is the system energy component and
$-\bm S^{\mathsf T}\bm\phi_t^*$ is the congestion component.



\subsection{Bid-cost Recovery and Make-Whole Payments}
\label{subsec:underpayment}
Rolling-window LMP may fail to provide price support because $\pi^{\LMP}_{mt}$ can be lower than the bid-in cost $c^m_{it}$ of a generator $i$ at bus $m$. In such cases, the generator is typically compensated through an OOM make-whole payment.
\begin{align}
    \mbox{\sf MWP}_{it}^{m} := \max\left\{  \left(c_{it}^{m}-\pi_{mt}^{\LMP}\right) g_{it}^{m*} ,0  \right\}.
\label{eq:interval_mwp}
\end{align}
This out-of-market payment guarantees interval-level bid-cost recovery.\footnote{In practice, the operator may instead use session-level bid-cost recovery, which is no greater than the corresponding interval-level compensation.} The settlement is discriminatory because MWPs are paid only to generators with bid-cost shortfalls.



\section{Optimal Uniform Pricing without Make-Whole Uplift}\label{sec:lmpwupu}
This section presents a pricing optimization that minimizes the
binding-interval demand payment while eliminating OOM make-whole
payments and preserving the LMP congestion charges. The constraints imply that every feasible price vector differs from LMP by a common, possibly negative, scalar. This uniform price adder also preserves the LMP merchandising surplus, as established in Theorem~\ref{thm1}.

\subsection{Optimal Pricing Formulation}

Consider the rolling-window dispatch at interval $t$ with binding net-demand vector $\hat{\bm d}_t$, dispatch $\bm g_t^*$, and LMP vector $\bm\pi_t^{\LMP}$. Define the dispatched-generator set and
total binding-interval net demand as
\begin{align}
\mathcal A_t &:= \left\{(i,m):m\in[M],\ i\in\mathcal N_m,\; g_{it}^{m*}>0\right\}, \nonumber\\
D_t &:= \sum_{m\in[M]}\hat d_t^m.
\label{eq:active_set_demand}
\end{align}
For a nodal price vector $\bm p_t$, define $\mbox{\sf MS}_t(\bm p_t):= \sum_{m\in[M]}p_{mt}(\hat d_t^m-q_t^{m*})$.
In particular, $\mbox{\sf MS}_t(\bm\pi_t^{\LMP})\geq0$ \cite{Chen21TPS}.

The price optimization program is

{\small
\vspace{-0.3cm}
\begin{align}
\min_{\bm\pi_t\in\mathbb R^M}\quad &\sum_{m\in[M]}\pi_{mt}\hat d_t^m \label{eq:nodal_upmu_obj}\\
\text{s.t.}\quad &\pi_{mt}\geq c_{it}^m, &&\forall(i,m)\in\mathcal A_t, \label{eq:nodal_zero_uplift}\\
&\pi_{mt}-\pi_{nt} = \pi_{mt}^{\LMP}-\pi_{nt}^{\LMP}, &&\forall m,n\in[M]. \label{eq:congestion_spread}
\end{align}
}
Constraint~\eqref{eq:nodal_zero_uplift} enforces zero OOM make-whole payment, whereas \eqref{eq:congestion_spread} preserves the LMP congestion charges. The resulting optimal price vector, referred to as LMP+U, is characterized next.

\begin{theorem}[Optimality and Properties]
\label{thm1}
Suppose $\mathcal A_t\neq\emptyset$ and $D_t>0$. The unique optimal solution of \eqref{eq:nodal_upmu_obj}--\eqref{eq:congestion_spread} is obtained from a scalar price adder $\Delta\pi_t^*$ that is uniform across all locations:

\vspace{-0.3cm}
{\small
\begin{equation}
\bm\pi_t^{\LMPUm} = \bm\pi_t^{\LMP}+\Delta\pi_t^*\boldsymbol 1,
\ \Delta\pi_t^* := \max_{(i,m)\in\mathcal A_t}
\left(c_{it}^m-\pi_{mt}^{\LMP}\right). \label{eq:LMPU} \end{equation}
}
LMP+U has the following properties:
\begin{enumerate}
\item Every dispatched generator recovers its bid-in cost, and the congestion charges under LMP+U equal those under LMP.

\item LMP+U preserves the LMP merchandising surplus. Let $\mathcal O_t^{\mathrm{RT}}$ denote the real-time market obligations funded from merchandising surplus. The LMP+U market settlement is revenue adequate\footnote{A market settlement policy $\Pi$ is revenue adequate if it does not leave the system operator with a financial deficit. Specifically, at interval $t$, it requires $\mbox{\sf MS}_t(\bm\pi_t^\Pi)-\mathcal O_t^{\mathrm{RT}}  -\operatorname{MWP}_t^\Pi\geq0$. Additional charges to demand used to recover a settlement deficit are not counted as settlement revenue.}
whenever $\mbox{\sf MS}_t(\bm\pi_t^{\LMP}) \geq\mathcal O_t^{\mathrm{RT}}$.

\item If the transmission constraints in \eqref{eq:flow} are nonbinding, LMP+U reduces to MDCP. If the ramping constraints in \eqref{eq:inter_ramp} and the lower generation bounds in \eqref{eq:capacity} are nonbinding, and some $(i,m)\in\mathcal A_t$ satisfies $g_{it}^{m*}<\overline g_i^m$, then LMP+U reduces to LMP.
\end{enumerate}
\end{theorem}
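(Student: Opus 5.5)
The plan is to reduce the program \eqref{eq:nodal_upmu_obj}--\eqref{eq:congestion_spread} to a one-dimensional problem. Fix any bus, say bus $1$, and write $\Delta := \pi_{1t}-\pi_{1t}^{\LMP}$. Constraint \eqref{eq:congestion_spread} with $n=1$ gives $\pi_{mt}=\pi_{mt}^{\LMP}+\Delta$ for every $m$. Conversely, every vector of the form $\bm\pi_t^{\LMP}+\Delta\mathbf 1$ satisfies \eqref{eq:congestion_spread}. So the feasible set is exactly $\{\bm\pi_t^{\LMP}+\Delta\mathbf 1:\Delta\in\mathbb R\}$ intersected with \eqref{eq:nodal_zero_uplift}, and \eqref{eq:nodal_zero_uplift} becomes $\Delta\geq c_{it}^m-\pi_{mt}^{\LMP}$ for all $(i,m)\in\mathcal A_t$, i.e.\ $\Delta\geq\Delta_t^*$. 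Since $\mathcal A_t\neq\emptyset$ is finite, the maximum is attained and the feasible set is the nonempty half-line $[\Delta_t^*,\infty)$. The objective equals $\sum_m\pi_{mt}^{\LMP}\hat d_t^m+\Delta D_t$. Because $D_t>0$, this is strictly increasing in $\Delta$, so the unique minimizer is $\Delta=\Delta_t^*$, which gives \eqref{eq:LMPU}.

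For property 1, bid-cost recovery is just feasibility of \eqref{eq:nodal_zero_uplift}: $(\pi^{\LMPUm}_{mt}-c_{it}^m)g_{it}^{m*}\geq0$ for dispatched units, so each $\mbox{\sf MWP}$ in \eqref{eq:interval_mwp}, recomputed with LMP+U, vanishes. Congestion charges depend only on price differences between buses, which \eqref{eq:congestion_spread} preserves.

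For property 2, I compute $\mbox{\sf MS}_t(\bm\pi_t^{\LMP}+\Delta_t^*\mathbf 1)=\mbox{\sf MS}_t(\bm\pi_t^{\LMP})+\Delta_t^*\sum_m(\hat d_t^m-q_t^{m*})$. The last sum is zero by the binding-interval power balance \eqref{eq:balance} at $t'=t$, so the merchandising surplus is unchanged. Since $\mbox{\sf MWP}_t^{\LMPUm}=0$, the revenue-adequacy condition reduces to $\mbox{\sf MS}_t(\bm\pi_t^{\LMP})\geq\mathcal O_t^{\mathrm{RT}}$, which is the hypothesis.

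For property 3, the two cases go as follows.
\begin{itemize}
\item \emph{No binding flow limits.} Complementary slackness gives $\bm\phi_t^*=0$, so $\bm\pi_t^{\LMP}=\lambda_t^*\mathbf 1$. Then $\pi^{\LMPUm}_{mt}=\lambda_t^*+\max_{\mathcal A_t}(c_{it}^m-\lambda_t^*)=\max_{(i,m)\in\mathcal A_t}c_{it}^m$, which is the MDCP definition.
\item \emph{No binding ramp constraints or lower bounds.} I will use the KKT stationarity condition for $g_{it}^m$ at the binding interval. With $\underline\tau,\overline\tau$ at $t$ and $t+1$ and $\underline\alpha$ all zero, it reads $c_{it}^m=\pi_{mt}^{\LMP}-\overline\alpha_{it}^m$. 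Here I must check the sign conventions, and that the $t+1$ ramp multipliers coupling $g_{it}$ are also zero under the hypothesis ``ramping constraints nonbinding.'' Hence $c_{it}^m-\pi_{mt}^{\LMP}=-\overline\alpha_{it}^m\leq0$ for every unit. For the unit in $\mathcal A_t$ strictly below its upper bound, $\overline\alpha_{it}^m=0$, so that term equals $0$. Thus $\Delta_t^*=0$ and LMP+U equals LMP.
\end{itemize}

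The main obstacle is this last step: carefully writing the Lagrangian of \eqref{eq:obj}--\eqref{eq:capacity}, identifying the nodal price in stationarity as $\lambda_t^*-(\bm S^{\mathsf T}\bm\phi_t^*)_m$ per \eqref{eq:lmp_decomposition}, and making sure the intertemporal ramp multipliers (including those from interval $t+1$ in the window) drop out. Everything else is linear algebra on a one-parameter family.
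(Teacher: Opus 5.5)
Your proposal is correct and follows essentially the same route as the paper's proof: collapse \eqref{eq:congestion_spread} to the one-parameter family $\bm\pi_t^{\LMP}+\Delta\mathbf 1$, use $D_t>0$ to obtain the unique minimizer $\Delta=\Delta\pi_t^*$, use the binding-interval power balance to show that the uniform adder leaves the merchandising surplus unchanged, and derive property 3 from $\bm\phi_t^*=0$ and from KKT stationarity plus complementary slackness on $\overline\alpha_{it}^{m*}$. Your concern about the interval-$(t+1)$ ramp multipliers resolves as you expect: the paper treats them as covered by the hypothesis that the ramping constraints are nonbinding, and with standard Lagrangian signs stationarity gives $\pi_{mt}^{\LMP}-c_{it}^m=\overline\alpha_{it}^{m*}\geq 0$, exactly as in the paper.
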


\noindent{\it Proof:}
By \eqref{eq:congestion_spread}, the deviation from LMP is equal at every bus. Hence, for some $\Delta_t\in\mathbb R$, $\bm\pi_t=\bm\pi_t^{\LMP}+\Delta_t\boldsymbol 1$, and the demand payment becomes $\sum_{m\in[M]}\pi_{mt}^{\LMP}\hat d_t^m+D_t\Delta_t$.
Constraint~\eqref{eq:nodal_zero_uplift} requires
\[ \Delta_t\geq \max_{(i,m)\in\mathcal A_t} \left(c_{it}^m-\pi_{mt}^{\LMP}\right) =\Delta\pi_t^*.
\]
Since $D_t>0$, the unique minimum is attained at $\Delta_t=\Delta\pi_t^*$, yielding \eqref{eq:LMPU}. Property 1) follows directly from \eqref{eq:nodal_zero_uplift} and \eqref{eq:congestion_spread}.

Property 2) follows from the power-balance constraint and the uniformity of the LMP+U price adder, which nullifies its effect on merchandising surplus. Since LMP+U has zero MWPs, its settlement is revenue adequate whenever $\mbox{\sf MS}_t(\bm\pi_t^{\LMP}) \geq\mathcal O_t^{\mathrm{RT}}$.

For Property 3), if \eqref{eq:flow} is nonbinding, then $\pi_{mt}^{\LMP}=\lambda_t^*$ at every bus, and
\begin{align*}
\pi_t^{\LMPUm} =\lambda_t^* +\max_{(i,m)\in\mathcal A_t} \left(c_{it}^m-\lambda_t^*\right) =\max_{(i,m)\in\mathcal A_t}c_{it}^m =\pi_t^{\MDCP}.
\end{align*}
Finally, if the ramping constraints and lower generation bounds are nonbinding, stationarity gives $\pi_{mt}^{\LMP}-c_{it}^m =\overline{\alpha}_{it}^{m*}\geq0$ for every active generator. If one is below its upper capacity, complementary slackness gives $\overline{\alpha}_{it}^{m*}=0$. Therefore, $\Delta\pi_t^*=0$ and $\bm\pi_t^{\LMPUm}=\bm\pi_t^{\LMP}$.
\hfill$\blacksquare$

\section{Numerical Results}

We evaluate the proposed LMP with Uniform Price Adder  (\LMPU{}) against LMP and two alternative pricing schemes using a rolling-window, multi-interval simulation.  The analysis considers aggregate generator net revenue, make-whole payments (MWPs), MWP-adjusted merchandising surplus, bus-price volatility, congestion-charge distortion, and bidding incentives.

\subsection{Parameter Settings}

We simulate the three-bus system in Fig.~\ref{fig:three_bus} with three 500-MW generators, $G_1$, $G_2$, and $G_3$, whose marginal costs are \$25/MWh, \$30/MWh, and \$50/MWh, respectively. Demand is located at Bus~3 and follows the 100-day CAISO net-demand profiles shown in Fig.~\ref{fig:caiso_profile}. The dispatch uses a four-interval rolling window ($W=4$), corresponding to a one-hour look-ahead horizon at 15-minute resolution\footnote{Further details on the CAISO data and rolling-window simulation are provided in~\cite{Chen26arxiv}.}.  We vary the generators' common 15-minute ramping capability as a fraction of their capacity and compare uncongested transmission with a congested case in which $\overline f_{13}=200$ MW.

We compare \LMPU{} with LMP, Extended MDCP (E-MDCP), and PMP. At each generation bus, E-MDCP sets the price to the maximum of its LMP and the highest bid among its dispatched generators; it retains LMP at buses without dispatched generation. PMP uses previously settled prices in a separate multi-interval pricing problem intended to reduce out-of-market uplift \cite{Chen21TPS}. Any remaining bid-cost shortfall under LMP or PMP is compensated through an MWP. Constraint-preserving multi-interval pricing (CMP) was also evaluated but is omitted because it coincided with LMP in these simulations.

\begin{figure}[htbp]
    \centering
    \begin{minipage}[b]{0.49\columnwidth}
        \centering

        \raisebox{0.5cm}{\includegraphics[width=\linewidth]{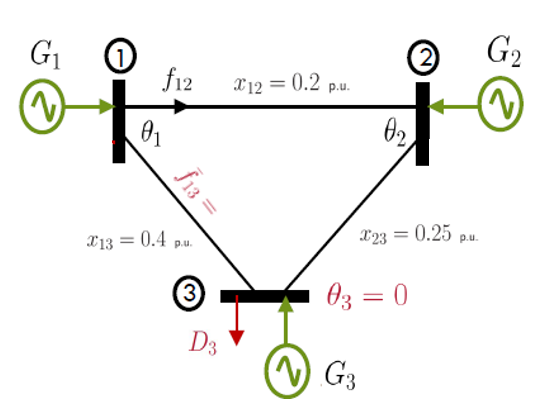}}
        \caption{Topology of the 3-bus network.}
        \label{fig:three_bus}
    \end{minipage}
    \hfill
    \begin{minipage}[b]{0.49\columnwidth}
        \centering
        \includegraphics[width=\linewidth]{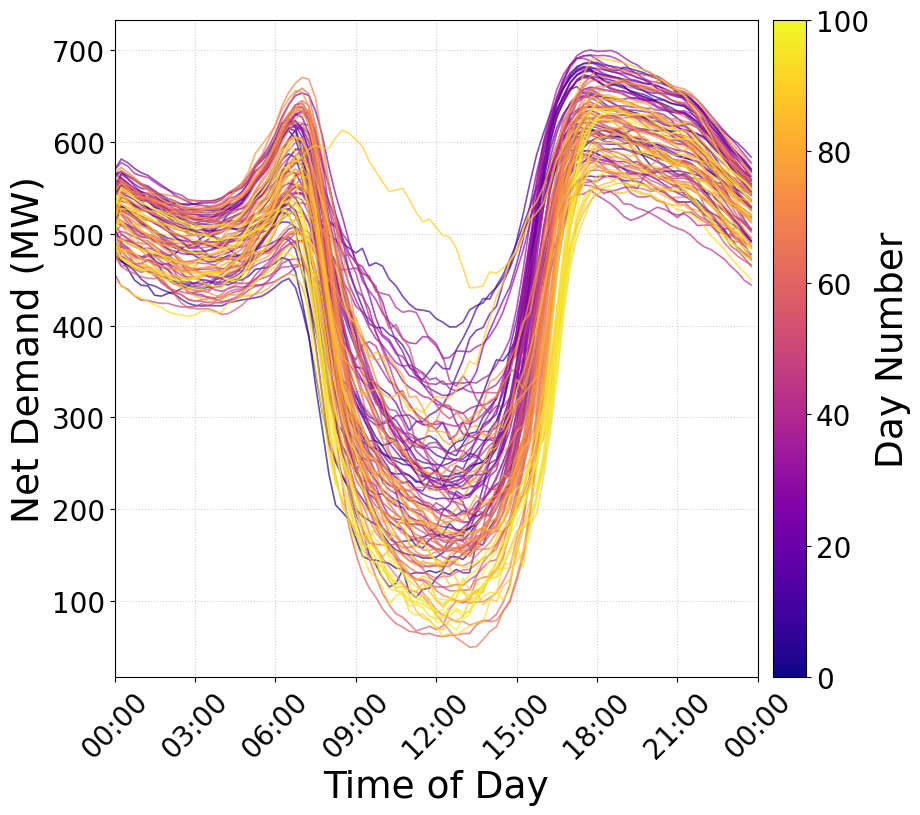}
        \caption{Scaled 100-day CAISO net-demand profile.}
        \label{fig:caiso_profile}
    \end{minipage}
\end{figure}


\subsection{Generator Net Revenue, Make-Whole Payments, and Merchandising Surplus}

Figure~\ref{fig:financial_outcomes} reports 100-day average daily performance metrics over the selected morning ramp-down hours. Aggregate generator net revenue is defined as energy revenue plus MWP minus production cost, while the third column reports MWP-adjusted merchandising surplus, defined as merchandising
surplus after deducting OOM MWPs but before real-time market obligations. Demand payment is omitted because it follows the same qualitative pattern as aggregate generator net revenue.

\begin{figure}[htbp]
    \centering
    \includegraphics[width=\columnwidth]{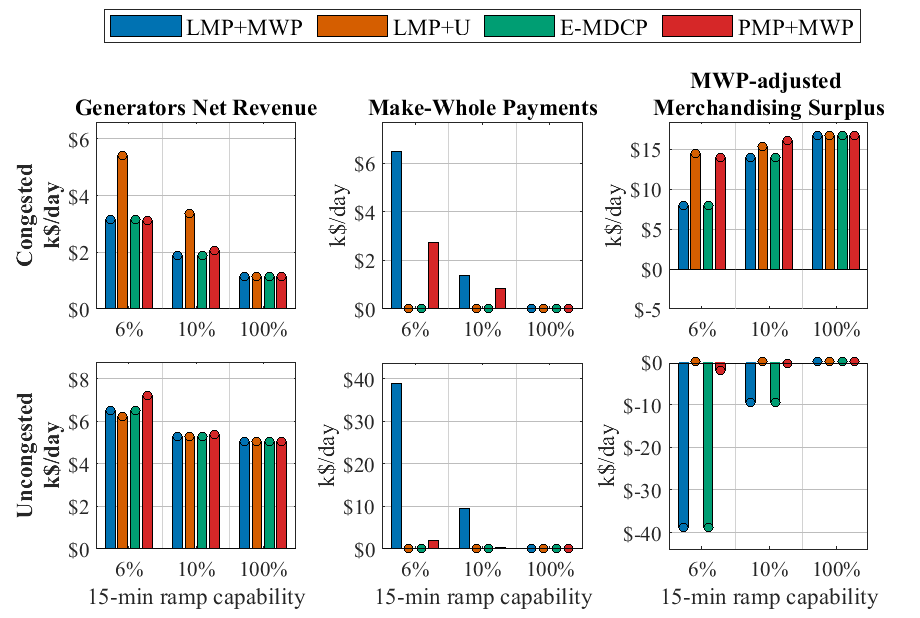}
    \caption{100-day average daily generator net revenue, MWPs, and MWP-adjusted merchandising surplus during the morning ramp-down: congested (top) and unconstrained (bottom).}
    \label{fig:financial_outcomes}
\end{figure}


Figure~\ref{fig:financial_outcomes} shows that \LMPU{}  generally produces higher aggregate generator net revenue than LMP with MWP because a positive common price adjustment increases the net revenue of every active generator. However, because the \LMPU{} price adjustment is unrestricted in sign, an exception occurs during the unconstrained 6\% ramping scenario, where a negative adjustment yields slightly lower net revenue.

The second column shows the MWP requirements accumulated over the morning window. Without congestion, production is concentrated among lower-cost generators that lack sufficient downward flexibility, resulting in roughly \$39k/day in MWPs under LMP at 6\% ramping capability. With congestion, production is distributed across more generators. Although this increases total production cost, it provides additional downward flexibility, reducing MWPs under LMP to approximately \$6.5k/day. PMP reduces MWPs but does not eliminate them. As ramping capability increases to 100\%, MWPs decrease and the metrics across mechanisms converge.

Finally, the third column confirms the theoretical properties established in Theorem~\ref{thm1}. Both \LMPU{} and E-MDCP eliminate MWPs entirely, but only \LMPU{} is guaranteed to preserve the nonnegative LMP merchandising surplus. Under LMP and PMP, the MWP-adjusted merchandising surplus becomes negative when the required MWPs exceed the collected surplus, indicating a negative settlement balance even before real-time market obligations are considered.

\subsection{Price Volatility and Congestion-Charge Distortion}

Figure~\ref{fig:price_volatility_distortion} compares the hourly price volatility and congestion-charge distortion under congested transmission with 6\% ramping capability. Volatility is measured by the standard deviation of the Bus~3 price across the 100 simulated days. Congestion-charge distortion is measured by the RMSE between
each mechanism's Bus~3--Bus~1 price spread and the corresponding
LMP spread.


\begin{figure}[htbp]
    \centering
    \includegraphics[width=\columnwidth]{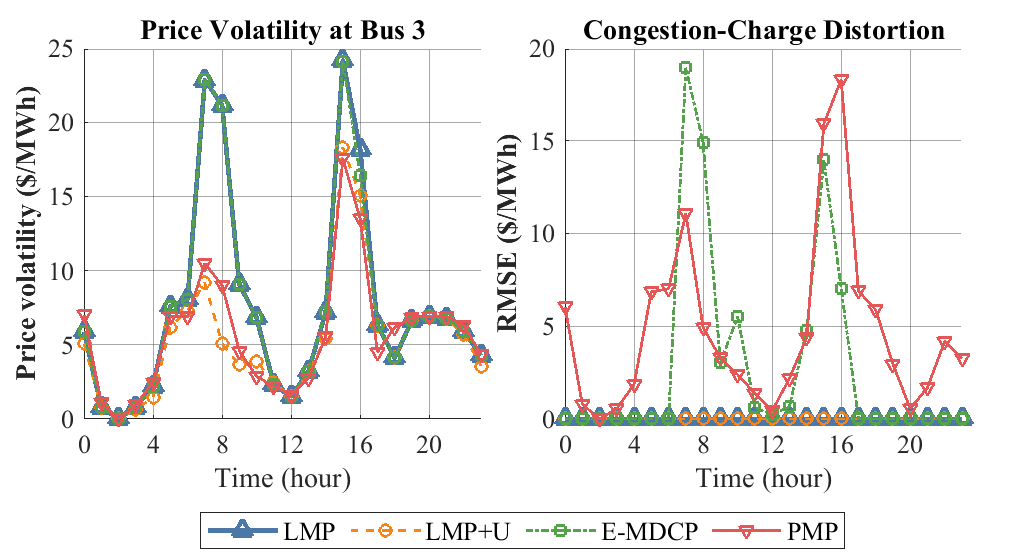}
    \caption{Hourly Bus~3 price volatility (left) and Bus~3--Bus~1 congestion-charge distortion (right) under congested transmission and 6\% ramping capability.}

    \label{fig:price_volatility_distortion}
\end{figure}

Price volatility is greatest during the morning and afternoon ramps. During the afternoon ramp, \LMPU{} reduces the peak volatility from approximately \$24/MWh to \$17/MWh, or about 29\%. E-MDCP and PMP also reduce volatility in some hours.
However, because they adjust prices differently across buses, this reduction comes with changes in nodal price spreads that distort the locational signals associated with transmission congestion. \LMPU{} avoids this tradeoff. Consistent with Theorem~\ref{thm1}, it preserves the LMP nodal price spreads and merchandising surplus, yielding zero congestion-charge distortion. In contrast, E-MDCP and PMP guarantee neither property.

\subsection{Bidding Incentives}
For the bidding experiment, all generators and demand are placed at Bus~3 to remove spatial price differences. The experiment uses a 15-minute ramping capability equal to 6\% of each generator's capacity. The submitted offer of $G_1$ is varied from \$25/MWh to \$29/MWh, while its true marginal cost remains \$25/MWh. All other bids and physical parameters are  fixed.

The results are aggregated over the period from 2:30~p.m. to 10:30~p.m. This window is selected so that $G_1$ remains dispatched and is a price taker under LMP and \LMPU{}. That is, its dispatch
and the corresponding prices remain unchanged over the tested
bid range.



\begin{table}[t]
\vspace{-0.2cm}
\centering
\caption{True net revenue of $G_1$ over the selected window.}
\label{tab:g1_net_revenue}
\footnotesize
\setlength{\tabcolsep}{4pt}
\begin{tabular}{c c c c c}
\toprule
\multirow{2}{*}{\begin{tabular}[c]{@{}c@{}}$G_1$ Bid-in\\ cost (\$/MWh)\end{tabular}} & \multicolumn{4}{c}{True $G_1$ Net Revenue (\$)} \\
\cmidrule(lr){2-5}
 & \textbf{LMP+MWP} & \textbf{LMP+U} & \textbf{E-MDCP} & \textbf{PMP+MWP} \\
\midrule
25 & 18,208 & 18,602 & 18,602 & 20,199 \\
27 & 18,366 & 18,602 & 18,602 & 21,004 \\
29 & 18,524 & 18,602 & 18,602 & 21,808 \\
\bottomrule
\end{tabular}
\end{table}

Table~\ref{tab:g1_net_revenue} shows that, although the LMP and dispatch remain unchanged as $G_1$ raises its offer, its true net revenue under LMP+MWP increases from \$18.2k to \$18.5k because the MWP is calculated from its submitted offer. The bid-dependent MWP therefore creates an incentive to inflate the submitted offer. True net revenue under PMP+MWP also increases with the submitted bid. In contrast, true net revenue remains constant at \$18.6k under \LMPU{} and E-MDCP. Thus, within the tested bid range, $G_1$ cannot increase its true net revenue by overstating its bid under either mechanism, consistent with \cite{Chen26arxiv}.

\section{Conclusion}


This paper presents \LMPU{}, a uniform pricing rule that eliminates ramp-induced OOM make-whole uplift while minimizing demand payment and extends key LMP properties from single-interval dispatch with unconstrained ramping to multi-interval dispatch with ramp-constrained generators and arbitrary net-demand forecast errors. Specifically, \LMPU{} preserves the LMP congestion charges and nonnegative merchandising surplus and is revenue adequate whenever that surplus covers the applicable real-time market obligations.


Simulations show that OOM make-whole uplift can incentivize a price-taking generator to inflate its offer above marginal cost, extending a phenomenon first identified for LOC uplift \cite{Chen21TPS}. This incentive is absent under \LMPU{} within the tested bid range, consistent with the theoretical properties of MDCP in uncongested systems \cite{Chen26arxiv}. Broader theoretical and numerical analyses are left for future work. The simulations also show that \LMPU{} can reduce price volatility by as much as 29\%. 

In most cases, \LMPU{} produces higher generator net revenue and demand payment relative to the benchmarks, but this ordering can reverse for negative price adders. Higher demand payments reflect the in-market, uniform allocation of ramping costs while preserving LMP-based congestion charges used in real-time settlements.



\end{document}